\newcounter{thm}
\theoremstyle{definition}
\newtheorem{defi}[thm]{Definition}
\theoremstyle{plain}
\newtheorem{prop}[thm]{Proposition}
\newtheorem{theo}[thm]{Theorem}
\newtheorem{lemm}[thm]{Lemma}
\newtheorem{coro}[thm]{Corollary}
\begin{document}

    \title{\vspace{-1cm} Testing the Structure of Multipartite Entanglement with Hardy's Nonlocality}
    \author{Lijinzhi Lin}\email{linljz16@mails.tsinghua.edu.cn}
    \author{Zhaohui Wei}\email{weizhaohui@gmail.com}
    \affiliation{Center for Quantum Information, Institute for Interdisciplinary Information Sciences, Tsinghua University, Beijing 100084, P. R. China}

	\begin{abstract}
Multipartite quantum states may exhibit different types of quantum entanglement in that they cannot be converted into each other by local quantum operations only, and fully understanding mathematical structures of different types of multipartite entanglement is a very challenging task. In this paper, from the viewpoint of Hardy's nonlocality, we compare W and GHZ states and show a couple of crucial different behaviors between them. Particularly, by developing a geometric model for the Hardy's nonlocality problem of W states, we derive an upper bound for its maximal violation probability, which turns out to be strictly smaller than the corresponding probability of GHZ state. This gives us a new comparison between these two quantum states, and the result is also consistent with our intuition that GHZ states is more entangled. Furthermore, we generalize our approach to obtain an asymptotic characterization for general $N$-qubit W states, revealing that when $N$ goes up, the speed that the maximum violation probabilities decay is exponentially slower than that of general $N$-qubit GHZ states. We provide some numerical simulations to verify our theoretical results.
	\end{abstract}

    \maketitle

	\begin{section}{Introduction}

		Entanglement plays a central role in quantum information processing tasks, and it is often entanglement that makes quantum schemes enjoy
        remarkable advantage over their classical counterparts. Therefore,
        studying and characterizing the properties of quantum entanglement is naturally an important and fundamental problem. At present, the structure of quantum entanglement for bipartite quantum states has been relatively clear, especially the case of pure states. However, the situation of multipartite entanglement is much more complicated, and it is still far from being understood very well. Nevertheness, a remarkable fact on multipartite entanglement has been well-known, that is, multipartite quantum states can be entangled in different ways, in that different kinds of multipartite entanglement can not be converted into each other by local operations only \cite{DVC00}. A most famous example that demonstrates this fact is Greenberger-Horne-Zeilinger (GHZ) and W states, as they are two different forms of entanglement in three-qubit quantum states \cite{DVC00}.

        Different entanglement forms exhibit different properties. In the example of GHZ and W states, it has been well-known that GHZ state is more entangled, but W state is more robust against qubit loss. For general case of multipartite entanglement, however, very little like this is known. In order to gain a deep understanding of this problem, characterizing different entanglement forms from more viewpoints are highly demanded.

        One attempt of this kind is comparing the underlying quantum nonlocality given by different forms of multipartite entanglement, and this
        approach allows us to observe their differences directly in quantum labs \cite{BSV12}. What is more, since the differences come from nonlocality, this actually provides us a device-independent way to achieve the task, making it possible to distinguish different kinds of entanglement reliably by using unreliable quantum devices. In fact, it has been shown that Bell inequalities exist such that they can be violated by W states but not by GHZ states, and vice versa \cite{BSV12}. Based on observed quantum nonlocality, a lot of interesting results that certify the existence of multipartite entanglement have also been reported \cite{CGP+02,BGL+11,ZDBS19,ATB+19}.

        Besides Bell inequalities, Hardy's paradox provides another framework to describe quantum nonlocality \cite{Hardy93}. For convenience, in later discussions we use Hardy's nonlocality to address the nonlocal property revealed by Hardy's paradox.

        The original Hardy's nonlocality problem was a proof of entanglement for almost all two qubit states \cite{Hardy93}, and later was generalized to scenarios of multiple qubits, multiple settings, and qudit states \cite{Hardy97,Cereceda04,Chen13,Chen18}. Furthermore, a lot of experiments have been performed to confirm the paradox \cite{TBMM95,LS09,Karimi14}. In this paper, our comparisons will be based on the Hardy's nonlocality problem for multiqubit states proposed in \cite{Cereceda04}, which can be formulated as below. Consider an $N$-qubit quantum state $\ket{\psi}$ and two sets of observables $U_i$ and $D_i$($i\in [N]$, where $[N]\equiv\{1,2,...,N\}$), where the subscript $i$ represents that the observable measures the $i$-th qubit alone. The observables are set up so that
		\begin{align*}
			\mathrm{P}(D_1 U_2 \cdots U_N|\mathrm{++}\cdots\mathrm{+})= & 0, \\
			\mathrm{P}(U_1 D_2 \cdots U_N|\mathrm{++}\cdots\mathrm{+})= & 0, \\
			\cdots & \\
			\mathrm{P}(U_1 U_2 \cdots D_N|\mathrm{++}\cdots\mathrm{+})= & 0, \\
			\mathrm{P}(D_1 D_2 \cdots D_N|\mathrm{--}\cdots\mathrm{-})= & 0, \\
			\mathrm{P}(U_1 U_2 \cdots U_N|\mathrm{++}\cdots\mathrm{+})> & 0,
		\end{align*}
		where $\mathrm{P}(A_1 A_2 \cdots A_N|\mathrm{+++})$ denotes the joint probability when one measures the $i$-th qubit with the measurement setting $A_i$ and gets the outcome $+$, and the other expressions are similar. And for convenience, we call the first $N$ relations \emph{equation constraints}.

		It turns out that quantum entanglement is necessary to manifest Hardy's nonlocality, i.e., satisfy all the constraints above \cite{Hardy93,Cereceda04}. Indeed, in any classical scenario where each local measurement is independent of the others, the last inequality gives $\mathrm{P}(U_i|\mathrm{+})>0$ for all $i\in[N]$, implying that $\mathrm{P}(D_i|+)=0$ for all $i\in[N]$, which is a contradiction to $\mathrm{P}(D_1 D_2 \cdots D_N|\mathrm{--}\cdots\mathrm{-})=0$.
Therefore, if a classical system satisfies the first $N$ constraints, we must have that $\mathrm{P}(U_1 U_2 \cdots U_N|\mathrm{++}\cdots\mathrm{+})=0$, and the violation to this relation means that the system must be quantum. For convenience, when the first $N$ constraints are satisfied, we call the maximal value of $\mathrm{P}(U_1 U_2 \cdots U_N|\mathrm{++}\cdots\mathrm{+})$ the maximal violation probability.

		Since Hardy's nonlocality reveals quantumness of entangled quantum states, it should allow us to look into the essential properties of
        multipartite entanglement, including describing the differences between multipartite entanglement forms. However, to our knowledge Hardy's nonlocality has not been utilized to compare different entanglement structures of multipartite quantum states. In this paper, complementing a previous work that investigated Hardy's nonlocality for multipartite GHZ states \cite{Cereceda04}, we analyze Hardy's nonlocality for multipartite W states.
        
        Specifically, by developing a new geometric model for W states in Hardy's nonlocality problem, we derive an upper bound of the maximal violation probability for the perfect 3 qubit W state, which is 1/9 and strictly smaller than the corresponding probability of the perfect 3 qubit GHZ state, 0.125. Note that this comparison is consistent with our intuition that the GHZ state is more entangled, though we have known that entanglement and nonlocality are two different computational resources. Furthermore, we also obtain an asymptotic lower bound of maximum violation probabilities for multipartite W states as well, which is roughly $\Omega(1/N)$. And this means that when $N$ goes up, the speed that maximum violation probabilities for multipartite W states decay is exponentially slower than that of multipartite GHZ states. Therefore, our results indicate a couple of crucial different behaviors of W states and GHZ states from the viewpoint of Hardy's nonlocality. We also provide some numerical simulation results to verify our theoretical results.

%



	\end{section}

	\begin{section}{Geometric Model for Generalized 3 Qubit W States}

		In this section, we first consider the \emph{generalized W state}, which can be expressed as
		\begin{align}\label{eq:generalizedW}
			\ket{\psi}= & a_1\ket{100}+a_2\ket{010}+a_3\ket{001},
		\end{align}
		where $a_i\neq0$ and $\sum_i |a_i|^2=1$.
		By applying local phases on the basis state $\ket{1}$ of each qubit, we may assume without loss of generality that $a_i>0$. 

        Note that the constraints and the objective in Hardy's nonlocality problem are given by relations on joint probability distributions of measurement outcomes of observables. We now develop a
        geometric model to represent local observables for generalized W states, which allows us to formulate these joint probability distributions in the language of vectors. Later we will see that the geometric model can be generalized to $N$-qubit generalized W states.

		Given an observable $A$, let $\lambda$ be one of its eigenvalues with 1-dimensional eigenspace, and its corresponding eigenstate can be
        written as
		\begin{align}\label{eq:state}
			\ket{\phi}=\cos\varphi\ket{0}+e^{i\theta}\sin\varphi\ket{1},
		\end{align}
		where $\varphi\in[0,\pi/2]$ and $\theta\in[0,2\pi)$.
		To build our geometric model, when $\varphi\neq\pi/2$ we make the following definition.

		\begin{defi}
			\label{defi:vec}
			The \emph{representation vector} of the observable/eigenvalue pair $(A,\lambda)$ is defined as
			\begin{align}\label{eq:representation}
				v(A,\lambda)\equiv(\tan\varphi \cos\theta,\tan\varphi \sin\theta)^T\in\mathbb{R}^2,
			\end{align}
            and for convenience, when $\varphi\neq\pi/2$ we say $v(A,\lambda)$ is \emph{well-defined}.
		\end{defi}

		
Recall that Hardy's nonlocality problem is a maximization problem among local observables $U_i$ and $D_i$ with eigenvalues $\pm 1$, where the subscript $i\in[3]$ indicates the observable measuring the $i$-th qubit:
		\begin{alignat}{2}
			& \text{maximize:} & \qquad & \mathrm{P}(U_1 U_2 U_3|\mathrm{+++}),  \label{eq:3Hardy1}\\
			& \text{subject to:} & & \mathrm{P}(D_1 U_2 U_3|\mathrm{+++})=0,  \label{eq:3Hardy2}\\
			& & & \mathrm{P}(U_1 D_2 U_3|\mathrm{+++})=0, \label{eq:3Hardy3}\\
			& & & \mathrm{P}(U_1 U_2 D_3|\mathrm{+++})=0, \label{eq:3Hardy4}\\
			& & & \mathrm{P}(D_1 D_2 D_3|\mathrm{---})=0.  \label{eq:3Hardy5}
		\end{alignat}

		By our geometric model, the above conditions can be restated, as showed in the following proposition.

		\begin{prop}
			\label{prop:prob}
			Let $A_1,A_2,A_3$ be observables with all their eigenspaces being 1-dimensional.
			Let $\lambda_1, \lambda_2, \lambda_3$ be eigenvalues corresponding to $A_1,A_2,A_3$, respectively. Suppose $v(A_i,\lambda_i)$ is well-defined for $i\in[3]$.
			Let $t_i=a_i\cdot v(A_i,\lambda_i)$.
			Then
			\begin{align}\label{eq:joint}
				\mathrm{P}(A_1A_2A_3|\lambda_1\lambda_2\lambda_3)=\frac{\lVert t_1+t_2+t_3 \rVert^2}{(1+\frac{1}{a_1^2}\lVert t_1 \rVert^2)(1+\frac{1}{a_2^2}\lVert t_2 \rVert^2)(1+\frac{1}{a_3^2}\lVert t_3 \rVert^2)}.
			\end{align}
		\end{prop}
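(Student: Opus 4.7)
The plan is to compute the joint probability directly from first principles and then recognize the expression in terms of the representation vectors. Since each $A_i$ has a 1-dimensional eigenspace for $\lambda_i$, the joint outcome $(\lambda_1,\lambda_2,\lambda_3)$ corresponds to the projector onto $\ket{\phi_1}\otimes\ket{\phi_2}\otimes\ket{\phi_3}$, where $\ket{\phi_i}=\cos\varphi_i\ket{0}+e^{i\theta_i}\sin\varphi_i\ket{1}$. Hence $\mathrm{P}(A_1A_2A_3|\lambda_1\lambda_2\lambda_3)=|\braket{\phi_1\phi_2\phi_3|\psi}|^2$, and the work is purely to unpack this amplitude.

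First I would expand the inner product using the explicit form of $\ket{\psi}=a_1\ket{100}+a_2\ket{010}+a_3\ket{001}$. Each term produces one factor $e^{-i\theta_j}\sin\varphi_j$ from the single qubit in state $\ket{1}$ and two factors of $\cos\varphi_k$ from the others. Since $v(A_i,\lambda_i)$ is well-defined we have $\cos\varphi_i\neq 0$, so I can factor $\cos\varphi_1\cos\varphi_2\cos\varphi_3$ out of the sum and obtain
\begin{equation*}
\braket{\phi_1\phi_2\phi_3|\psi}=\cos\varphi_1\cos\varphi_2\cos\varphi_3\sum_{i=1}^{3}a_i e^{-i\theta_i}\tan\varphi_i.
\end{equation*}

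Next I would identify $\mathbb{R}^2$ with $\mathbb{C}$ via $(x,y)\leftrightarrow x+iy$; under this identification $v(A_i,\lambda_i)$ corresponds to $\tan\varphi_i\,e^{i\theta_i}$, so $t_i\leftrightarrow a_i\tan\varphi_i\,e^{i\theta_i}$. The bracketed sum above is therefore precisely the complex conjugate of $t_1+t_2+t_3$, and taking modulus squared gives $|\braket{\phi_1\phi_2\phi_3|\psi}|^2=\cos^2\varphi_1\cos^2\varphi_2\cos^2\varphi_3\cdot\lVert t_1+t_2+t_3\rVert^2$, using that the $\mathbb{C}$-modulus coincides with the $\mathbb{R}^2$-norm.

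Finally I would rewrite each $\cos^2\varphi_i$ using the trigonometric identity $\cos^2\varphi_i=1/(1+\tan^2\varphi_i)$. Since $\tan^2\varphi_i=\lVert v(A_i,\lambda_i)\rVert^2=\lVert t_i\rVert^2/a_i^2$, the denominators assemble into the claimed product $\prod_i (1+\lVert t_i\rVert^2/a_i^2)$, completing the proof. There is no real obstacle here; the only subtle point is keeping the complex phases and the $\mathbb{R}^2\leftrightarrow\mathbb{C}$ identification consistent, so that the phases $e^{-i\theta_i}$ coming from the bras match the conjugate of the complex encoding of the representation vectors.
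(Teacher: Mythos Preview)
Your proof is correct and follows essentially the same line as the paper's: compute the amplitude $\braket{\phi_1\phi_2\phi_3|\psi}$, factor out $\prod_i\cos\varphi_i$, identify the remaining sum with $t_1+t_2+t_3$, and rewrite $\cos^2\varphi_i=1/(1+\tan^2\varphi_i)$. The only cosmetic difference is that the paper packages the modulus-squared step via a real Gram matrix $Q$ with entries $\cos(\theta_i-\theta_j)$ and its factorization $Q=B^TB$ (columns $(\cos\theta_i,\sin\theta_i)^T$), whereas you use the identification $\mathbb{R}^2\cong\mathbb{C}$ directly; these are the same computation in different notation.
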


		\begin{proof}
			Suppose the eigenstate for $(A_i,\lambda_i)$ pair is
			\begin{align*}
				\ket{\phi_i}=\cos\varphi_i\ket{0}+e^{i\theta_i}\sin\varphi_i\ket{1},
			\end{align*}
			where $\varphi_i\in[0,\pi/2)$ and $\theta_i\in[0,2\pi)$.

			By the postulate of quantum measurement, we have that
			\begin{align*}
				\mathrm{P}(A_1A_2A_3|\lambda_1\lambda_2\lambda_3)=\bra{\psi}(\ket{\phi_1}\bra{\phi_1}\otimes\ket{\phi_2}\bra{\phi_2}\otimes\ket{\phi_3}\bra{\phi_3})\ket{\psi}=a^T Q a,
			\end{align*}
			where $Q$ is a positive semidefinite matrix defined as
			\begin{align}\label{eq:q}
				Q=
				\begin{pmatrix}
					1 & \cos(\theta_1-\theta_2) & \cos(\theta_1-\theta_3) \\
					\cos(\theta_1-\theta_2) & 1 & \cos(\theta_2-\theta_3) \\
					\cos(\theta_1-\theta_3) & \cos(\theta_2-\theta_3) & 1 \\
				\end{pmatrix},
			\end{align}
			and $a$ is a vector defined as
			\begin{align}\label{eq:a}
				a=
				\begin{pmatrix}
					a_1\sin\varphi_1\cos\varphi_2\cos\varphi_3 \\
					a_2\cos\varphi_1\sin\varphi_2\cos\varphi_3 \\
					a_3\cos\varphi_1\cos\varphi_2\sin\varphi_3 \\
				\end{pmatrix}.
			\end{align}

			The matrix $Q$ admits a factorization $Q=B^TB$, where
			\begin{align*}
				B=
				\begin{pmatrix}
					\cos\theta_1 & \cos\theta_2 & \cos\theta_3 \\
					\sin\theta_1 & \sin\theta_2 & \sin\theta_3 \\
				\end{pmatrix}.
			\end{align*}
			Therefore, the factorization gives
			\begin{align}\label{eq:Ba}
				\mathrm{P}(A_1A_2A_3|\lambda_1\lambda_2\lambda_3)=\lVert Ba \rVert^2.
			\end{align}

			Extracting the factor $\prod_i \cos^2 \varphi_i$ from the outcome probability, we have that
			\begin{align*}
				\mathrm{P}(A_1A_2A_3|\lambda_1\lambda_2\lambda_3)=\left(\prod_i \cos^2 \varphi_i \right)\left\lVert B
				\begin{pmatrix}
					a_1 \tan\varphi_1 \\
					a_2 \tan\varphi_2 \\
					a_3 \tan\varphi_3 \\
				\end{pmatrix}
				\right\rVert^2.
			\end{align*}
			According to the definition of $v(A_i,\lambda_i)$, it holds that
			\begin{align*}
				\mathrm{P}(A_1A_2A_3|\lambda_1\lambda_2\lambda_3)=\left(\prod_i \cos^2 \varphi_i \right)\lVert t_1+t_2+t_3\rVert^2.
			\end{align*}
			In the meanwhile, the cosine factors can be rewritten as
			\begin{align*}
				\cos^2 \varphi_i=\frac{1}{1+\tan^2\varphi_i}=\frac{1}{1+\frac{1}{a_i^2}\lVert t_i\rVert^2},
			\end{align*}
			which means that
			\begin{align*}
				\mathrm{P}(A_1A_2A_3|\lambda_1\lambda_2\lambda_3)=\frac{\lVert t_1+t_2+t_3 \rVert^2}{(1+\frac{1}{a_1^2}\lVert t_1 \rVert^2)(1+\frac{1}{a_2^2}\lVert t_2 \rVert^2)(1+\frac{1}{a_3^2}\lVert t_3 \rVert^2)}.
			\end{align*}
			This concludes the proof.
		\end{proof}

		We immediately have the following corollary:
		\begin{coro}
			\label{coro:annihilate}
			Let $A_1,A_2,A_3$ be observables with all their eigenspaces being 1-dimensional, and let $\lambda_1, \lambda_2, \lambda_3$ be eigenvalues corresponding to $A_1,A_2,A_3$, respectively. Suppose $v(A_i,\lambda_i)$ is well-defined for $i\in[3]$. Let $t_i=a_i\cdot v(A_i,\lambda_i)$.
			Then
			\begin{align*}
				\mathrm{P}(A_1A_2A_3|\lambda_1\lambda_2\lambda_3)=0
			\end{align*}
            if and only if
            \begin{equation}\label{eq:sum0}
            t_1+t_2+t_3=0.
            \end{equation}
		\end{coro}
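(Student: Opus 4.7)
The plan is to invoke Proposition \ref{prop:prob} directly and then reason about when a ratio of non-negative quantities can vanish. By the proposition, under the given hypotheses we have the explicit formula
\begin{align*}
\mathrm{P}(A_1A_2A_3|\lambda_1\lambda_2\lambda_3) = \frac{\lVert t_1+t_2+t_3 \rVert^2}{(1+\frac{1}{a_1^2}\lVert t_1 \rVert^2)(1+\frac{1}{a_2^2}\lVert t_2 \rVert^2)(1+\frac{1}{a_3^2}\lVert t_3 \rVert^2)}.
\end{align*}
So the proof amounts to inspecting the two sides of this equation.

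First I would observe that each factor in the denominator is of the form $1 + \tfrac{1}{a_i^2}\lVert t_i \rVert^2 \geq 1 > 0$, so the denominator is strictly positive (this uses nothing more than $a_i \neq 0$, which is part of the standing assumption on generalized W states, and the fact that $\lVert \cdot \rVert^2 \geq 0$). Consequently the fraction vanishes if and only if its numerator vanishes, i.e.\ if and only if $\lVert t_1+t_2+t_3 \rVert^2 = 0$. By positive-definiteness of the Euclidean norm on $\mathbb{R}^2$, this is equivalent to $t_1+t_2+t_3 = 0$, which is precisely \eqref{eq:sum0}.

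There is essentially no obstacle here; the corollary is a direct reading of the formula in Proposition \ref{prop:prob}. The only subtle point worth stating explicitly is that the well-definedness hypothesis $\varphi_i \neq \pi/2$ (ensuring $v(A_i,\lambda_i)$ exists as a vector in $\mathbb{R}^2$) is what lets us apply the proposition, and the assumption $a_i \neq 0$ from \eqref{eq:generalizedW} is what keeps the denominator bounded and positive. With those in hand, the biconditional follows in one line.
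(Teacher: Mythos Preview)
Your argument is correct and matches the paper's own treatment: the paper simply states that the corollary follows immediately from Proposition~\ref{prop:prob}, and your observation that the denominator is bounded below by $1$ makes that immediacy explicit.
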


		In order to formulate all constraints in Hardy's nonlocality problem, we make the following further definitions.
		\begin{defi}
			\label{defi:hardyvec}
			For $i\in[3]$, if $v(U_i,+1)$, $v(D_i,+1)$, and $v(D_i,-1)$ are well-defined, let
			\begin{align}\label{eq:vectors}
				u_i= & a_i v(U_i,+1), \\
				v_i= & a_i v(D_i,+1), \\
				w_i= & a_i v(D_i,-1).
			\end{align}
		\end{defi}
		
		With the new notations, we now translate the constraints in Hardy's nonlocality problem in the language of vectors defined above. First, by Corollary \ref{coro:annihilate}, we have
        \begin{equation}\label{eq:UUD}
            v_i=-\sum_{j\neq i}u_j
        \end{equation}
        for $i\in[3]$, and
        \begin{equation}\label{eq:DDD}
            \sum_{j}w_j=0.
        \end{equation}

    	Second, by Definition \ref{defi:vec}, we have $v_i=-\frac{a_i^2 w_i}{\lVert w_i \rVert^2}$. Indeed, suppose the eigenstate for $(D_i,+1)$ is
				\begin{align*}
					\ket{\phi_{+}}=\cos\alpha_i\ket{0}+e^{i\beta_i}\sin\alpha_i\ket{1}.
				\end{align*}
				Then the eigenstate for $(D_i,-1)$ is
				\begin{align*}
					\ket{\phi_{-}}=\sin\alpha_i\ket{0}-e^{i\beta_i}\cos\alpha_i\ket{1}.
				\end{align*}
				Now, by Definition \ref{defi:vec}, we have that
				\begin{align*}
					v_i= & a_i(\tan\alpha_i \cos\beta_i,\tan\alpha_i \sin\beta_i)^T, \\
					w_i= & -a_i(\cot\alpha_i \cos\beta_i,\cot\alpha_i \sin\beta_i)^T,
				\end{align*}
				hence $v_i=-\frac{a_i^2 w_i}{\lVert w_i \rVert^2}$.
		
		With the above observations, when all representation vectors are well-defined, the probability maximization problem can be rewritten as:
		\begin{alignat*}{2}
			& \text{maximize:} & \qquad & \mathrm{P}(U_1 U_2 U_3|\mathrm{+++}) \\
			& & = & \frac{\lVert u_1+u_2+u_3 \rVert^2}{(1+\frac{1}{a_1^2}\lVert u_1 \rVert^2)(1+\frac{1}{a_2^2}\lVert u_2 \rVert^2)(1+\frac{1}{a_3^2}\lVert u_3 \rVert^2)} \\
			& & = & \frac{\lVert v_1+v_2+v_3 \rVert^2/4}{(1+\frac{1}{4a_1^2}\lVert v_2+v_3-v_1 \rVert^2)(1+\frac{1}{4a_2^2}\lVert v_3+v_1-v_2 \rVert^2)(1+\frac{1}{4a_3^2}\lVert v_1+v_2-v_3 \rVert^2)}, \\
			& \text{subject to:} & & w_1+w_2+w_3=0,
		\end{alignat*}
		where $w_i$($i\in[3]$) are the variables and
		\begin{align*}
			v_i=-\frac{a_i^2 w_i}{\lVert w_i\rVert^2}.
		\end{align*}
		
	\end{section}

	\begin{section}{Bounding the Violation Probability for the W State}
		
		Based on the geometric model introduced above, we now prove our first main result, which shows that for the perfect W state the maximal violation probability in the Hardy's nonlocality problem is upper bound for 1/9. Since the geometric model supposes all representation vectors are well-defined, we first consider this case, then we show that the conclusion can be generalized to arbitrary case.
		\begin{lemm}
			\label{lemm:bound}
			Let $U_i$ and $D_i$ be observables in the Hardy's nonlocality problem for the W state $\ket{\psi}=1/\sqrt{3}(\ket{001}+\ket{010}+\ket{100})$.
			If $v(U_i,+1)$, $v(D_i,+1)$, and $v(D_i,-1)$ are well-defined for $i\in[3]$,
			then all equation constraints are satisfied implies that
			\begin{align*}
				\mathrm{P}(U_1 U_2 U_3|\mathrm{+++})\leq 1/9.
			\end{align*}
		\end{lemm}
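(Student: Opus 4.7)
The plan is to turn the problem, via the geometric model of Proposition~\ref{prop:prob}, into a constrained optimization over three vectors in~$\mathbb{R}^2$ and then bound the resulting expression by combining the constraint algebra with elementary inequalities.

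Specializing $a_i=1/\sqrt{3}$ and taking $v_1,v_2,v_3$ as the free variables, the problem becomes
\[
\max\ \frac{\lVert v_1+v_2+v_3\rVert^2/4}{\prod_{i=1}^{3}\bigl(1+\tfrac{3}{4}\lVert v_j+v_k-v_i\rVert^2\bigr)}
\]
subject to $\sum_i v_i/\lVert v_i\rVert^2=0$, which is the constraint $w_1+w_2+w_3=0$ rewritten via $w_i=-v_i/(3\lVert v_i\rVert^2)$ (equivalently $\lVert v_i\rVert\lVert w_i\rVert=1/3$ with $\hat v_i=-\hat w_i$). Writing $r_i=\lVert v_i\rVert$ and contracting the constraint successively with each $v_i$ yields three scalar identities showing that the $3\times 3$ Gram matrix of the unit vectors $\hat v_i$ annihilates $(1/r_1,1/r_2,1/r_3)^{T}$. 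Since this Gram matrix has rank at most~$2$ in~$\mathbb{R}^2$, the magnitudes $r_i$ are pinned, up to an overall scale, by the pairwise angles $\theta_{ij}$, cutting the six real parameters down to three.

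I would then attack the remaining optimization in two stages. First, eliminate the overall scale by one-variable calculus to reduce to a purely angular problem; the identities $\lVert u_j+u_k\rVert=\lVert v_i\rVert=r_i$ make this scale dependence explicit and tractable. Second, bound the resulting angular expression by pairing a Cauchy--Schwarz estimate of $\lVert v_1+v_2+v_3\rVert^2$ with an AM--GM estimate applied to the three denominator factors, using the constraint-derived identities to eliminate the pairwise inner products $v_i\cdot v_j$ in favour of the $r_i$ and $\theta_{ij}$ alone. The target $1/9$ should drop out of the AM--GM weighting dictated by the constraint.

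The main obstacle I expect is algebraic bulk: the denominator is a degree-six polynomial in the $v_i$, and landing exactly on $1/9$ rather than on a looser constant requires a carefully chosen inequality chain. This delicacy is forced by the role of the coupling: if one keeps only the linear condition $w_1+w_2+w_3=0$ and drops $\lVert v_i\rVert\lVert w_i\rVert=1/3$, one checks that the ratio can already be pushed up to $4/9$ (attained with $u_1=u_2=u_3$ colinear of equal norm), so any successful argument must genuinely use both ingredients. To find the correct weighting I would first locate the extremum --- note that the fully symmetric choice with equal $r_i$ at $120^{\circ}$ gives $v_1+v_2+v_3=0$ and hence $P=0$, so the maximizer must break this symmetry --- and then read off from the KKT conditions at that configuration which Cauchy--Schwarz/AM--GM matching is tight, before translating the result back into the inequality that proves the lemma.
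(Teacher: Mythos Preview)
Your setup via the geometric model is correct, and you rightly observe that both the linear constraint $\sum_i w_i=0$ and the inversion relation $v_i=-w_i/(3\lVert w_i\rVert^2)$ are essential. However, your proposed route---locate the extremum, extract KKT conditions, then reverse-engineer a tight Cauchy--Schwarz/AM--GM chain---has a structural problem: the bound $1/9$ is \emph{not} the actual maximum (numerically the latter is about $0.072$), so no inequality chain terminating at $1/9$ is tight at the true extremizer, and the KKT data there will not point you to the weighting you are hoping to read off. As written, your plan would at best recover the genuine maximum, not the stated $1/9$.

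The paper's argument is much shorter and bypasses extremization entirely. It rests on two observations you are missing. First, since each denominator factor exceeds $1$, one simply drops the cross terms in the product:
\[
\prod_{i=1}^{3}\left(1+\tfrac{3}{4}\lVert v_j+v_k-v_i\rVert^2\right)\ \ge\ 1+\tfrac{3}{4}\sum_{i=1}^{3}\lVert v_j+v_k-v_i\rVert^2 \;=\; 1+\tfrac{9}{4}\lVert v_1+v_2+v_3\rVert^2-6\sum_{i<j} v_i\cdot v_j,
\]
the last equality being an elementary polarization identity. Second---and this is the only place the constraint enters---the relations $w_1+w_2+w_3=0$ and $v_i=-w_i/(3\lVert w_i\rVert^2)$ together force
\[
v_1\cdot v_2+v_2\cdot v_3+v_3\cdot v_1\ \le\ 0,
\]
which the paper verifies by a direct coordinate computation after normalizing $w_1=-(M,0)^T$, $w_2=-(x,y)^T$, $w_3=(x+M,y)^T$. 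Once this sign is known the denominator dominates $1+\tfrac{9}{4}\lVert v\rVert^2$, and $\dfrac{\lVert v\rVert^2/4}{1+9\lVert v\rVert^2/4}<\dfrac{1}{9}$ is immediate. Your Gram-matrix reduction and the $4/9$ warning are correct observations but unnecessary here: the sign condition on $\sum_{i<j} v_i\cdot v_j$ is the single algebraic fact that does all the work.
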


		\begin{proof}
			We have known that the target probability can be expressed as
			\begin{align*}
				& \mathrm{P}(U_1U_2U_3|\mathrm{+++}) \\
				= & \frac{\frac{1}{4}\lVert v_1+v_2+v_3\rVert^2}
				{
					\left(1+\frac{3}{4}\lVert -v_1+v_2+v_3\rVert^2\right)
					\left(1+\frac{3}{4}\lVert v_1-v_2+v_3\rVert^2\right)
					\left(1+\frac{3}{4}\lVert v_1+v_2-v_3\rVert^2\right)
				}.
			\end{align*}
			Expanding the denominator gives
			\begin{align*}
				& \mathrm{P}(U_1U_2U_3|+++) \\
				\leq &
				\frac{\frac{1}{4}\lVert v_1+v_2+v_3\rVert^2}
				{
					1+\frac{3}{4}\left(
					\lVert -v_1+v_2+v_3\rVert^2+
					\lVert v_1-v_2+v_3\rVert^2+
					\lVert v_1+v_2-v_3\rVert^2
					\right)
				} \\
				= &
				\frac{\frac{1}{4}\lVert v_1+v_2+v_3\rVert^2}
				{
					1+\frac{9}{4}\lVert v_1+v_2+v_3\rVert^2
					-6\left(v_1\cdot v_2+v_2\cdot v_3+v_3\cdot v_1\right)
				}.
			\end{align*}

			Since isometries preserve inner products, we may assume that
			\begin{align*}
				w_1= & -(M,0)^T, \\
				w_2= & -(x,y)^T, \\
				w_3= & -(-x-M,-y)^T,
			\end{align*}
			without loss of generality, where we have utilized the relation $w_1+w_2+w_3=0$.
			By the assumption that all $v(D_i,+1)$ and $v(D_i,-1)$ are well-defined,
			we have $w_1, w_2, w_3\neq 0$; that is, $M\neq 0$, $x^2+y^2\neq 0$ and $(x+M)^2+y^2\neq 0$.

			By relation $v_i=-w_i/(3\lVert w_i\rVert^2)$,
			we have
			\begin{align*}
				v_1= & (1/M,0)^T/3, \\
				v_2= & \left(\frac{x}{x^2+y^2},\frac{y}{x^2+y^2}\right)^T/3, \\
				v_3= & \left(\frac{-x-M}{(x+M)^2+y^2},\frac{-y}{(x+M)^2+y^2}\right)^T/3.
			\end{align*}
			Then
			\begin{align*}
				& v_1\cdot v_2+v_2\cdot v_3+v_3\cdot v_1 \\
				= & \frac{\frac{x}{M}\left((x+M)^2+y^2\right)-x(x+M)-y^2-\frac{x+M}{M}\left(x^2+y^2\right)}
				{9\left(x^2+y^2\right)(\left(x+M)^2+y^2\right)} \\
				= & \frac{-2y^2}
				{9\left(x^2+y^2\right)(\left(x+M)^2+y^2\right)} \leq 0.
			\end{align*}
			Therefore,
			\begin{align*}
				\mathrm{P}(U_1U_2U_3|\mathrm{+++}) \leq \frac{\frac{1}{4}\lVert v_1+v_2+v_3\rVert^2}{1+\frac{9}{4}\lVert v_1+v_2+v_3\rVert^2} \leq 1/9.
			\end{align*}
		\end{proof}

		We now show that the assumption in Lemma \eqref{lemm:bound} that all representation vectors involved in the Hardy's nonlocality problem are well-defined can be removed, which means that in this case the upper bound in Lemma \eqref{lemm:bound} is still correct.

        For this, we first suppose two of $v(U_i,+1)$ are not well-defined, then it can be seen that the vector $a$ in Eq.\eqref{eq:a} for $P(U_1U_2U_3|+++)$ is zero, thus Eq.\eqref{eq:Ba} indicates that $P(U_1U_2U_3|+++)=0$, and it does not hurt the upper bound.
        Second, similar argument shows that if one of $v(A_i|\lambda_i)$ is not well-defined, then $P(A_1A_2A_3|\lambda_1\lambda_2\lambda_3)=0$ implies that there must be another $i'\neq i$ such that $v(A_{i'}|\lambda_{i'})$ is not well-defined either.

        Then combining the above two observations, we can rule out the possibility that only one of $v(U_i,+1)$, say $v(U_1,+1)$, is not well-defined. If this is the case, then Eq.\eqref{eq:3Hardy3} and Eq.\eqref{eq:3Hardy4} means that $v(D_2,+1)$ and $v(D_3,+1)$ are not well-defined, i.e., $v(D_2,-1)=0$ and $v(D_3,-1)=0$. By applying Corollary \ref{coro:annihilate} on $P(D_1D_2D_3|---)=0$, we have that $v(D_1,-1)=0$, and this indicates that $v(D_1,+1)$ is not well-defined either. However, we know that $P(D_1U_2U_3|+++)=0$, and this needs that at least one of $v(U_2,+1)$ and $v(U_3,+1)$ is not well-defined, a contradiction. In summary, if any vector in $v(U_i,+1)$, $v(D_i,+1)$, and $v(D_i,-1)$ is not well-defined, satisfying all equation constraints means that $P(U_1U_2U_3|+++)=0$. Therefore, we have the following theorem.

%
%

		\begin{theo}\label{theo:bound}
			Let $U_i$ and $D_i$ be observables in the Hardy's nonlocality problem for the W state $\ket{\psi}=1/\sqrt{3}(\ket{001}+\ket{010}+\ket{100})$. Then all equation constraints are satisfied implies that
			\begin{align*}
				\mathrm{P}(U_1 U_2 U_3|\mathrm{+++})\leq 1/9.
			\end{align*}
		\end{theo}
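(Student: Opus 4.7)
The plan is to reduce Theorem \ref{theo:bound} to Lemma \ref{lemm:bound} by a case analysis on which of the nine representation vectors $v(U_i,+1)$, $v(D_i,+1)$, $v(D_i,-1)$ for $i\in[3]$ fail to be well-defined. When all nine are well-defined, Lemma \ref{lemm:bound} applies directly; in every degenerate configuration I will argue that the equation constraints force either $\mathrm{P}(U_1U_2U_3|\mathrm{+++})=0$ or a residual setting where a direct computation still delivers a bound below $1/9$.

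The key mechanical ingredient is read off from Eq.~\eqref{eq:a} in the proof of Proposition \ref{prop:prob}: if $v(A_i,\lambda_i)$ is not well-defined, i.e.\ $\varphi_i=\pi/2$, then every entry of the vector $a$ other than the $i$-th carries a vanishing $\cos\varphi_i$ factor. Two facts follow. \emph{Fact (i)}: if at least two $v(A_i,\lambda_i)$ fail to be well-defined, then $a=0$ and hence $\mathrm{P}(A_1A_2A_3|\lambda_1\lambda_2\lambda_3)=\lVert Ba\rVert^2=0$. \emph{Fact (ii)}: if exactly one fails, say at index $i$, then $\mathrm{P}=a_i^2\cos^2\varphi_j\cos^2\varphi_k$ with $\{j,k\}=[3]\setminus\{i\}$, so requiring $\mathrm{P}=0$ forces at least one of $v(A_j,\lambda_j)$ or $v(A_k,\lambda_k)$ to fail as well.

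Armed with these I first case-split on how many of $v(U_i,+1)$ fail. Two or three fail: Fact (i) on $\mathrm{P}(U_1U_2U_3|\mathrm{+++})$ gives $\mathrm{P}=0$ directly. Exactly one, say $v(U_1,+1)$, fails: Fact (ii) on Eqs.~\eqref{eq:3Hardy3}--\eqref{eq:3Hardy4} forces $v(D_2,+1)$ and $v(D_3,+1)$ to fail, so $v(D_2,-1)=v(D_3,-1)=0$; Fact (ii) on Eq.~\eqref{eq:3Hardy5} certifies $v(D_1,-1)$ is well-defined, whereupon Corollary \ref{coro:annihilate} yields $v(D_1,-1)=0$, so $v(D_1,+1)$ fails; then Fact (ii) on Eq.~\eqref{eq:3Hardy2} forces $v(U_2,+1)$ or $v(U_3,+1)$ to fail, a contradiction. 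Hence every $v(U_i,+1)$ is well-defined, and a symmetric application of Fact (ii) on Eqs.~\eqref{eq:3Hardy2}--\eqref{eq:3Hardy4} rules out any $v(D_i,+1)$ failing.

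The main obstacle I anticipate is the residual configuration where all $v(U_i,+1)$ and all $v(D_i,+1)$ are well-defined but one or more $v(D_i,-1)$ fails (so $v(D_i,+1)=0$ for the failing indices, and Lemma \ref{lemm:bound}'s hypotheses narrowly miss). Fact (ii) on Eq.~\eqref{eq:3Hardy5} rules out exactly one $v(D_i,-1)$ failing; if all three fail, Corollary \ref{coro:annihilate} on Eqs.~\eqref{eq:3Hardy2}--\eqref{eq:3Hardy4} forces every $v(U_i,+1)=0$ and thus $\mathrm{P}(U_1U_2U_3|\mathrm{+++})=0$. The delicate sub-case is two failures, say $v(D_2,-1)$ and $v(D_3,-1)$: Eqs.~\eqref{eq:3Hardy3}--\eqref{eq:3Hardy4} pin $a_1v(U_1,+1)=-a_2v(U_2,+1)=-a_3v(U_3,+1)$ down to a common $u\in\mathbb{R}^2$, Eq.~\eqref{eq:3Hardy2} fixes $v(D_1,+1)=2u/a_1$, and Eq.~\eqref{eq:3Hardy5} is automatic by Fact (i). Substituting $a_i=1/\sqrt{3}$ into the probability formula yields $\mathrm{P}(U_1U_2U_3|\mathrm{+++})=\lVert u\rVert^2/(1+3\lVert u\rVert^2)^3$, which elementary calculus bounds by $4/81<1/9$. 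Combining all branches completes the proof.
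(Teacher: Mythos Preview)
Your argument is correct and follows the same overall strategy as the paper: invoke Lemma~\ref{lemm:bound} in the generic case and dispatch the degenerate configurations by a case analysis built on the two ``facts'' you extract from Eq.~\eqref{eq:a}. Your treatment of the $v(U_i,+1)$ cases and the exclusion of any $v(D_i,+1)$ failing coincide with the paper almost verbatim.

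Where you go further than the paper is in the residual branch where every $v(U_i,+1)$ and $v(D_i,+1)$ is well-defined but some $v(D_i,-1)$ is not. The paper's proof text simply asserts in its summary sentence that any ill-defined representation vector, together with the equation constraints, forces $\mathrm{P}(U_1U_2U_3|\mathrm{+++})=0$; it never explicitly examines this branch. As you discovered, that blanket claim is not literally true: in the sub-case where exactly two of the $v(D_i,-1)$ fail, all four equation constraints are satisfied (Eq.~\eqref{eq:3Hardy5} trivially, by your Fact~(i)) and one obtains $\mathrm{P}(U_1U_2U_3|\mathrm{+++})=\lVert u\rVert^2/(1+3\lVert u\rVert^2)^3$, which can reach $4/81>0$. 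Your direct calculus bound $4/81<1/9$ then closes the gap. So your proof is not merely a reproduction of the paper's argument but a genuine tightening of it: the paper's case analysis is incomplete as written, and your additional sub-case is exactly what is needed to make the reduction to Lemma~\ref{lemm:bound} rigorous.
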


%
%
%
%

		Theorem \ref{theo:bound} essentially states that the violation probability of the perfect 3 qubit W state is upper-bounded by $1/9$.
		For comparison, it has been shown that obtaining a violation probability of $0.125$ is possible from the perfect 3 qubit GHZ state \cite{Cereceda04}.
		
	\end{section}

	\begin{section}{Generalization to $N$ Qubit W States}

		In this section, our first task is to show that the geometric model introduced above can be generalized to $N$ qubit W states with $N>3$.

		The $N$ qubit generalized $W$ state is defined as
		\begin{align*}
			\ket{\psi}= & a_1\ket{10\cdots 0}+a_2\ket{01\cdots 0}+\cdots+a_N\ket{00\cdots 1},
		\end{align*}
		where $a_i> 0$ for all $i\in[N]$ and $\sum_ia_i^2=1$. When $a_i=1/\sqrt{N}$ for all $i\in[N]$, we call it the $N$ qubit perfect $W$ state, denoted by $\ket{W_N}$. For convenience, we denote the $N$ qubit perfect $GHZ$ state as
\begin{equation}
\ket{GHZ_N}=\frac{1}{2}(\ket{00\cdots 0}+\ket{11\cdots 1}).
\end{equation}

		Following Definition \ref{defi:vec}, the joint measurement outcome probability formula is readily generalized as Proposition \ref{prop:multiprob}.

		\begin{prop}
			\label{prop:multiprob}
			Let $A_i(i\in[N])$ be observables with all their eigenspaces being 1-dimensional.
			Let $\lambda_i$ be an eigenvalue corresponding to $A_i$. Suppose $v(A_i,\lambda_i)$ is well-defined for $i\in[N]$.
			Let $v_i=a_i\cdot v(A_i,\lambda_i)$.
			Then
			\begin{align*}
				\mathrm{P}(A_1\cdots A_N|\lambda_1\cdots \lambda_N)=\frac{\left\lVert \sum\limits_{i=1}^{N}v_i \right\rVert^2}{\prod\limits_{i=1}^{N}\left(1+\frac{1}{a_i^2}\lVert v_i \rVert^2\right)}.
			\end{align*}
		\end{prop}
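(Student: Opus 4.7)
The plan is to extend the proof of Proposition \ref{prop:prob} step by step to $N$ qubits, exploiting the same single-excitation structure of the generalized W state. Writing the eigenstate of $(A_i,\lambda_i)$ as $\ket{\phi_i}=\cos\varphi_i\ket{0}+e^{i\theta_i}\sin\varphi_i\ket{1}$ with $\varphi_i\in[0,\pi/2)$ (which is what well-definedness of $v(A_i,\lambda_i)$ buys us), the first step is to evaluate the amplitude
\begin{align*}
\langle\phi_1\otimes\cdots\otimes\phi_N|\psi\rangle
= \sum_{i=1}^{N} a_i \langle\phi_i|1\rangle \prod_{j\neq i}\langle\phi_j|0\rangle
= \sum_{i=1}^{N} a_i\, e^{-i\theta_i}\sin\varphi_i \prod_{j\neq i}\cos\varphi_j,
\end{align*}
using only that $\ket{\psi}$ is a linear combination of Hamming-weight-one basis vectors with coefficients $a_i$. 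The desired joint probability is then the modulus squared of this amplitude.

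Next I would pull out the common factor $\prod_j \cos\varphi_j$ and rewrite the remaining sum as $\sum_i a_i e^{-i\theta_i}\tan\varphi_i$. Separating real and imaginary parts and squaring gives exactly $\lVert \sum_i v_i \rVert^2$, because by Definition \ref{defi:vec} the vector $v_i = a_i(\tan\varphi_i\cos\theta_i,\tan\varphi_i\sin\theta_i)^T$ precisely encodes the real and imaginary contributions of $a_i e^{-i\theta_i}\tan\varphi_i$. Equivalently, the $a^T Q a$ calculation used in the proof of Proposition \ref{prop:prob} carries over verbatim: the $N\times N$ Gram matrix $Q$ with entries $\cos(\theta_i-\theta_j)$ still admits the rank-two factorization $Q=B^T B$, where $B$ is the $2\times N$ matrix whose $i$-th column is $(\cos\theta_i,\sin\theta_i)^T$, so the norm-squared interpretation survives for every $N$.

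Finally, applying the identity $\cos^2\varphi_j = 1/(1+\tan^2\varphi_j) = 1/(1+\lVert v_j\rVert^2/a_j^2)$ factor by factor turns the leftover $\prod_j \cos^2\varphi_j$ into exactly $\prod_j(1+\lVert v_j\rVert^2/a_j^2)^{-1}$, which combined with the previous step yields the claimed formula. I do not expect any genuinely new obstacle here: the $N=3$ argument nowhere used the dimension beyond the single-excitation structure of $\ket{\psi}$ and the rank-two factorization of the cosine Gram matrix, and both features persist for arbitrary $N$, so the only thing to check is that the bookkeeping of factors and indices is done correctly.
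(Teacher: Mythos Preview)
Your proposal is correct and follows essentially the same approach as the paper: the paper does not give a separate proof of Proposition~\ref{prop:multiprob} but simply states that the $N=3$ computation generalizes, and your write-up carries out exactly that generalization. Your direct amplitude calculation and the $a^T Q a = \lVert Ba\rVert^2$ formulation you also mention are two equivalent ways of expressing the same argument used in the proof of Proposition~\ref{prop:prob}.
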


		\begin{coro}
			\label{coro:multiannihilate}
			Let $A_i(i\in[N])$ be observables with all their eigenspaces being 1-dimensional.
			Let $\lambda_i$ be an eigenvalue corresponding to $A_i$. Suppose $v(A_i,\lambda_i)$ is well-defined for $i\in[N]$.
			Let $v_i=a_i\cdot v(A_i,\lambda_i)$.
			Then
			\begin{align*}
				\mathrm{P}(A_1\cdots A_N|\lambda_1\cdots \lambda_3)=0
			\end{align*}
            if and only if
            \begin{align*}
            \sum\limits_{i=1}^{N}v_i=0.
            \end{align*}
		\end{coro}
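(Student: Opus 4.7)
The plan is to derive this corollary as a direct consequence of Proposition \ref{prop:multiprob}. Under the hypothesis that every $v(A_i,\lambda_i)$ is well-defined, that proposition gives the closed-form expression
\begin{align*}
\mathrm{P}(A_1\cdots A_N|\lambda_1\cdots\lambda_N)=\frac{\left\lVert \sum_{i=1}^{N}v_i\right\rVert^2}{\prod_{i=1}^{N}\left(1+\frac{1}{a_i^2}\lVert v_i\rVert^2\right)},
\end{align*}
so the entire content of the corollary is to analyze separately when the numerator and the denominator of this ratio can vanish.

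First I would observe that the denominator is strictly positive: each factor $1+\lVert v_i\rVert^2/a_i^2$ is at least $1$, since $\lVert v_i\rVert^2\geq 0$ and $a_i^2>0$ by the standing assumption that each coefficient of the generalized $W$ state is nonzero. Therefore the probability vanishes if and only if the numerator vanishes. Second, I would note that the numerator $\left\lVert \sum_i v_i\right\rVert^2$ is just the squared Euclidean norm of a single vector in $\mathbb{R}^2$, so it equals zero if and only if the vector $\sum_i v_i$ is itself the zero vector. Combining the two observations gives both directions of the ``if and only if''.

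There is no substantive obstacle here: all the nontrivial content is already packaged inside Proposition \ref{prop:multiprob}, and the well-definedness hypothesis is exactly what is needed so that the proposition applies in the first place. The argument mirrors exactly how Corollary \ref{coro:annihilate} was extracted from Proposition \ref{prop:prob} in the three-qubit case, the only difference being $N$ summands in place of three. If I wanted to be pedantic, I could also record that the result is purely a statement about when a ratio of nonnegative quantities equals zero, and so no geometric input beyond the proposition itself is required.
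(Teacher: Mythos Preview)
Your proposal is correct and matches the paper's approach: the corollary is presented there as an immediate consequence of Proposition~\ref{prop:multiprob}, with no additional argument given, and your analysis of the numerator and denominator is exactly the implicit reasoning.
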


		The $N$ qubit Hardy's nonlocality can be restated as the following maximization problem among the observables $U_i$ and $D_i$:
		\begin{alignat*}{2}
			& \text{maximize:} & \qquad & \mathrm{P}(U_1U_2 \cdots U_N|\mathrm{++}\cdots\mathrm{+}), \\
			& \text{subject to:} & & \mathrm{P}(D_1 U_2 \cdots U_N|\mathrm{++}\cdots\mathrm{+})=0, \\
			& & & \mathrm{P}(U_1 D_2 \cdots U_N|\mathrm{++}\cdots\mathrm{+})=0, \\
			& & & \cdots \\
			& & & \mathrm{P}(U_1 U_2 \cdots D_N|\mathrm{++}\cdots\mathrm{+})=0, \\
			& & & \mathrm{P}(D_1 D_2 \cdots D_N|\mathrm{--}\cdots\mathrm{-})=0.
		\end{alignat*}

		\begin{defi}
			\label{defi:mutlihardyvec}
			For $i\in[N]$, if $v(U_i,+1)$, $v(D_i,+1)$, and $v(D_i,-1)$ are well-defined, let
			\begin{align*}
				u_i= & a_i v(U_i,+1), \\
				v_i= & a_i v(D_i,+1), \\
				w_i= & a_i v(D_i,-1).
			\end{align*}
			Additionally, let $u=\sum_{i\in[N]} u_i$, $v=\sum_{i\in[N]} v_i$ and $w=\sum_{i\in[N]} w_i$.
		\end{defi}

		By the constraints in Hardy's nonlocality, we have the following relations:
		\begin{align*}
			\forall i \in[N], \ \  & v_i=-a_i^2w_i/\lVert w_i\rVert^2, \\
			\forall i \in[N], \ \  & v_i=-(u-u_i), \\
			\forall i \in[N], \ \  & u_i=v_i+u, \\
			& u=-\frac{v}{N-1}.
		\end{align*}
		
		Under the relations above, the probability maximization problem become:
		\begin{alignat*}{2}
			& \text{maximize:} & \qquad & \mathrm{P}(U_1 U_2 \cdots U_N|\mathrm{++}\cdots\mathrm{+}) \\
			& & = & \frac{\lVert u \rVert^2}{\prod\limits_{i=1}^{N}\Big(1+\frac{1}{a_i^2}\lVert u_i \rVert^2\Big)} \\
			& & = & \frac{\lVert v \rVert^2/(N-1)^2}{\prod\limits_{i=1}^{N}\Big(1+\lVert v-(N-1)v_i \rVert^2/(((N-1)^2 a_i^2))\Big)}\\
			& \text{subject to:} & & w=0,
		\end{alignat*}
		where $w_i$($i\in[N]$) are the variables and
		\begin{align*}
			v_i=-\frac{a_i^2 w_i}{\lVert w_i\rVert^2}.
		\end{align*}

		We now turn to the second task of this section. Different from the $3$-qubit case, we consider lower bounding the maximum violation probability of the Hardy nonlocality problem when $N$ is large. The following theorem gives such an asymptotic lower bound. Since we are focusing on a lower bound, we can suppose that all the involved representation vectors are well-defined.

		\begin{theo}
			\label{theo:mutliperbound}
			Let $P(N)$ denote the maximum violation probability in the Hardy's nonlocality problem for the perfect $N$ qubit $W$ states.
			Then $P(N)=\Omega\left(N^{-1}\right)$.
		\end{theo}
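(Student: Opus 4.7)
The strategy is to exhibit an explicit family of observables meeting all equation constraints and giving $\mathrm{P}(U_1\cdots U_N|+\cdots+)=\Omega(1/N)$, using the vector reformulation just developed. For the perfect $W$ state, $a_i^2=1/N$, so $w_i=-v_i/(N\lVert v_i\rVert^2)$ and the objective becomes
\begin{align*}
\mathrm{P}(U_1\cdots U_N|+\cdots+)=\frac{\lVert u\rVert^2}{\prod_{i=1}^N\bigl(1+N\lVert u_i\rVert^2\bigr)},
\end{align*}
with $u=-v/(N-1)$ and $u_i=v_i+u$. The problem reduces to choosing vectors $v_1,\ldots,v_N\in\mathbb{R}^2$ whose associated $w_i$ sum to zero and for which the right-hand side decays only like $1/N$.

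The ansatz I would try is a ``one-outlier'' configuration: fix any unit vector $v_0\in\mathbb{R}^2$, set $v_1=\cdots=v_{N-1}=v_0$, and let $v_N$ be determined by $\sum_i w_i=0$. Inverting the $v\leftrightarrow w$ relation, the $N-1$ identical copies $w_i=-v_0/N$ are cancelled exactly by $w_N=(N-1)v_0/N$, which corresponds to $v_N=-v_0/(N-1)$. Substituting gives $v=N(N-2)v_0/(N-1)$ and $u=-N(N-2)v_0/(N-1)^2$, and the identity $(N-1)^2-N(N-2)=1$ produces the very small vector $u_i=v_0/(N-1)^2$ for $i<N$, while $u_N=-(N^2-N-1)v_0/(N-1)^2$ has length approaching $1$.

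Evaluating the probability is then a direct estimate. The numerator satisfies $\lVert u\rVert^2=N^2(N-2)^2/(N-1)^4\to 1$. For each $i<N$, $N\lVert u_i\rVert^2=N/(N-1)^4=O(1/N^3)$, so the product of those $N-1$ denominator factors is bounded by $\exp\!\bigl((N-1)\cdot N/(N-1)^4\bigr)\to 1$. The remaining factor is $1+N(N^2-N-1)^2/(N-1)^4=N+O(1)$. Combining these three estimates gives $\mathrm{P}(U_1\cdots U_N|+\cdots+)\geq 1/(N+O(1))$ for large $N$, which is $\Omega(1/N)$ as claimed.

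The only potentially delicate points are the algebraic verification that the ansatz really satisfies $\sum_i w_i=0$ and the control of the product of the $N-1$ near-unity factors; both are routine since the perturbations are $O(1/N^3)$. As noted just before the theorem statement, for a lower bound one may freely assume all representation vectors are well-defined, so no degenerate case needs separate treatment.
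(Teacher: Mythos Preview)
Your proof is correct and follows essentially the same route as the paper: both use the one-dimensional ``one-outlier'' ansatz with $N-1$ identical vectors and one compensating vector, then verify that the numerator tends to $1$, the $N-1$ small denominator factors contribute a product tending to $1$, and the single large factor grows like $N$. The only cosmetic difference is normalization---you fix $\lVert v_i\rVert=1$ for $i<N$ and solve for the $w_i$, whereas the paper fixes $w_i=1/(N-1)$ for $i<N$ and $w_N=-1$; the two configurations are related by a global rescaling of the $w_i$ (equivalently of the $v_i$), which preserves the constraint $\sum_i w_i=0$ and yields the same $\Omega(1/N)$ asymptotic.
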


		\begin{proof}
			For simplicity, we represent the vectors $u_i, v_i, w_i$ with one real number each, in the sense that their second component is equal to zero.

			Let $w_i=1/(N-1)$ for $i\in[N-1]$ and $w_N=-1$.
			Then, $\forall i\in[N-1]$,
			\begin{align*}
				v_i= & -\frac{(N-1)}{N}, \\
				v_N= & \frac{1}{N}, \\
				u= & \frac{N-2}{N-1}, \\
				u_i= & -\frac{1}{N(N-1)}, \\
				u_N= & \frac{N^2-N-1}{N(N-1)}.
			\end{align*}
			Now let $N$ tend to $+\infty$.
			By Proposition \ref{prop:multiprob}, the violation probability under this settings is
			\begin{align*}
				& \frac{\lVert u \rVert^2}{\prod\limits_{i=1}^{N}\left(1+\frac{1}{a_i^2}\lVert u_i \rVert^2\right)} \\
				\approx & \frac{1}{\left(1+\frac{1}{N(N-1)^2}\right)^{N-1}\left(1+\frac{(N^2-N-1)^2}{N(N-1)^2}\right)} \\
				\approx & \frac{1/N}{\left(1+\frac{1}{N(N-1)^2}\right)^{N-1}} \\
				\approx & 1/N.
			\end{align*}
			Therefore, we have $P(N)=\Omega \left(N^{-1}\right)$.

		\end{proof}

		As a comparison, it has been known that the maximum violation probabilities for the $N$ qubit perfect GHZ states diminish exponentially with $N$ \cite{Cereceda04}, thus we witness another sharp difference between asymptotic behaviors of $\ket{W_N}$ and $\ket{GHZ_N}$ when $N$ tends to infinite. Therefore, on one hand $\ket{GHZ_N}$ enjoys stronger nonlocality than $\ket{W_N}$ if $N=3$, but on the other hand, when $N$ becomes larger the speed that Hardy's nonlocality of $\ket{W_N}$ decays is much slower.

	\end{section}

	\begin{section}{Numerical Simulation Results}

    We made the following numerical simulations to verify or complement our theoretical results.

    \begin{subsection}{The $3$-qubit perfect W state}

        We first consider the case of the $3$-qubit perfect W state. In order to parameterize the involved measurements, let
        \begin{align}
				v(U_i,+1)\equiv(\tan\varphi_{1,i} \cos\theta_{1,i},\tan\varphi_{1,i} \sin\theta_{1,i})^T\in\mathbb{R}^2
		\end{align}
        and
        \begin{align}
				v(D_i,+1)\equiv(\tan\varphi_{2,i} \cos\theta_{2,i},\tan\varphi_{2,i} \sin\theta_{2,i})^T\in\mathbb{R}^2
		\end{align}


		For this, we consider the following construction, which is guided by the geometric model.
		\begin{gather}
			\tan\varphi_{11}=M/4, \tan\varphi_{12}=5M/4, \tan\varphi_{13}=M/4, \\
			\theta_{11}=0, \theta_{12}=\pi, \theta_{13}=0, \\
			\tan\varphi_{21}=M, \tan\varphi_{22}=M/2, \tan\varphi_{23}=M, \\
			\theta_{21}=0, \theta_{22}=\pi, \theta_{23}=0,
		\end{gather}
		where $M>0$ is a parameter. In this setting, when $M$ is picked to maximize the violation probability,
		the outcome probability is about $0.071868$, which is indeed below the maximum violation probability $0.125$ that the perfect 3 qubit GHZ state can achieve \cite{Cereceda04}.
    \end{subsection}

    \begin{subsection}{The $3$-qubit generalized W state}
    		For generalized 3 qubit W states,
		when $a_1=0.448473$, $a_2=0.632011$ and $a_3=0.632008$, the following configuration
		\begin{gather*}
			\tan\varphi_{11}=1.320219, \tan\varphi_{12}=0.147611, \tan\varphi_{13}=0.147611, \\
			\theta_{11}=\pi, \theta_{12}=0, \theta_{13}=0, \\
			\tan\varphi_{21}=0.295222, \tan\varphi_{22}=1.172608, \tan\varphi_{23}=1.172607, \\
			\theta_{21}=\pi, \theta_{22}=0, \theta_{23}=0,
		\end{gather*}
		achieves violation probability of $0.0977381$, which is higher than the perfect 3 qubit W state.
		\begin{center}
			\begin{figure}[!ht]
				\centering
				\includegraphics[scale=0.8]{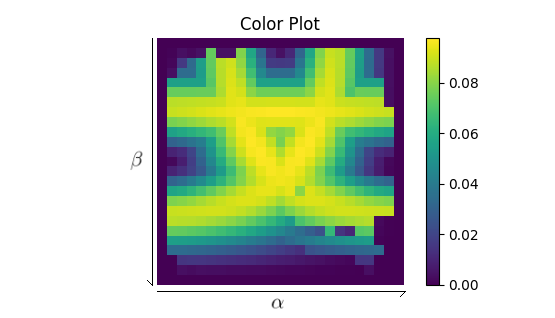}
				\caption{
					The color plot of the maximum violation probability for different amplitude settings $(a_1,a_2,a_3)$.
					The amplitudes are parametrized using spherical coordinate, with angles $\alpha$ and $\beta$ ranging from $0$ to $\pi/2$.
					The angle $\alpha$ increases from left to right (horizontally), while $\beta$ increases from top to bottom (vertically).
				}
				\label{fig:amplitude}
			\end{figure}
		\end{center}
		The results above are obtained via an optimization package.
		Notice that both resulting sequences $(\varphi_{1,i})$ and $(a_i)$ exhibit $S_{N-1}=S_{2}$ symmetry, which matches the W state when the amplitudes are ignored.
		Therefore, we conjecture that the maximum violation probability can be achieved when $a_1=a_2=\cdots=a_{N-1}$ and $\varphi_{1,i}=\varphi_{2,i}=\cdots=\varphi_{N-1,i}$.
		If the conjecture is proven, then the maximization problem would be simplified in the sense that at most two real parameters would be free regardless of $N$.

		Figure \ref{fig:amplitude} is a color plot of the violation probability maximized using optimization package for different amplitudes.
		The horizontal and vertical axes represents $\alpha$ and $\beta$ from $0$ to $\pi/2$, respectively, which are used in the definitions of amplitudes as
		\begin{align*}
			a_1=\cos\beta\cos\alpha, a_2=\cos\beta\sin\alpha, a_3=\sin\beta.
		\end{align*}
		It is evident from the three yellow bands in the color plot that there is decent violation probability.
    \end{subsection}

    \begin{subsection}{The $N$-qubit perfect W state}
    		For the $N$-qubit perfect W state, numerical experiment gives the following lower bounds of the maximum violation probability $P(N)$ for the perfect $N$ qubit W states:
		\begin{center}
			\begin{tabular}{|c|c|}
				\hline
				$N$ & Maximum violation probability \\
				\hline
				3 & 0.07186776197291751 \\
				4 & 0.09802431986561981 \\
				5 & 0.1016666013383646 \\
				6 & 0.0981781711941636 \\
				7 & 0.09256920089757938 \\
				8 & 0.08658662542877839 \\
				9 & 0.08085438836971731 \\
				10 & 0.07557767230678995 \\
				\hline
			\end{tabular}
		\end{center}
		It is evident from the table that $P(N)$ is unimodal in the range $3\leq N\leq10$ and is maximized at $N=5$. This is an intriguing phenomenon
        worth further study. Additionally, the maximum violation probabilities assumed by the perfect $N$ qubit GHZ states are all below $P(N)$ except when $N=3$.
    \end{subsection}

	\end{section}

	\begin{section}{Conclusion}

		In this paper, we have analyzed Hardy's nonlocality for W states.
		For this purpose, we develop a geometric model for general W states, and this model allows us to describe the constraints in Hardy's nonlocality problem as relations on vectors, which in turn makes it convenient to characterize the target violation probability. 

		Concretely, for the perfect $3$-qubit W state, we have shown that its violation probability is upper-bounded by $1/9$.
		As a comparison, the perfect 3 qubit GHZ state has maximum violation probability $0.125$, and the stronger correlation provided by GHZ state is also consistent with our intuition that it is more entangled than W state, though we have known that entanglement and nonlocality are two different computational resources.

		For the perfect $N$-qubit W states where $N\geq 4$, we have shown that their maximum violation probabilities are at least $\Omega(N^{-1})$, making another sharp comparison with the perfect $N$ qubit GHZ states, as the maximum violation probabilities of the latter decay exponentially when $N$ goes up.

Therefore, it can be seen that Hardy's nonlocality indeed provides a new viewpoint to distinguish the different entanglement structures of GHZ and W states. We hope this approach can be generalized to more complicated and more general multipartite quantum states.


		
	\end{section}

\begin{acknowledgments}
L.L. and Z.W. are supported by the National Key R\&D Program of China, Grant No. 2018YFA0306703 and the start-up funds of Tsinghua University, Grant No. 53330100118. This work has been supported in part by the Zhongguancun Haihua Institute for Frontier Information Technology.
\end{acknowledgments}

\end{document}